\documentclass[letterpaper, 10 pt, conference]{cssconf}  %

\IEEEoverridecommandlockouts                              %
\usepackage{tikz}
\usetikzlibrary{positioning, calc, shapes.geometric, arrows.meta}
\usepackage{pgfplots}
\pgfplotsset{compat=1.18}
\graphicspath{ {./graphics/} }

\usepackage[caption=false]{subfig}

\title{\LARGE \bf
To Learn or Not to Learn: A Litmus Test for Using Reinforcement Learning in Control
}

\author{Victor Schulte, Michael Eichelbeck, and Matthias Althoff%
\thanks{This work was partially supported by the project GRK~3081 (No.~534429653), funded by the German Research Foundation (DFG). The authors are with the Department of Computer Engineering, Technical University
of Munich, Boltzmannstr. 3, 85748 Garching, Germany. \newline
E-mail: {\tt\small victor.schulte@tum.de; \newline michael.eichelbeck@tum.de; althoff@in.tum.de}}%
}

\newcommand{\indexMeasured}{m} %
\newcommand{\indexInitialState}{i} %
\newcommand{\indexDisturbance}{j} %
\newcommand{\indexTime}{k} %

\newcommand{\nominal}{{\text{nom}}}

\newcommand{\setStates}{\mathcal{X}} %
\newcommand{\setOutputs}{\mathcal{Y}} %
\newcommand{\setInputs}{\mathcal{U}} %
\newcommand{\setDisturbances}{\mathcal{W}} %

\newcommand{\vectorStates}{x} %
\newcommand{\vectorOutputs}{y} %
\newcommand{\vectorInputs}{u} %
\newcommand{\vectorStatesInputs}{z} %
\newcommand{\vectorDisturbances}{w} %
\newcommand{\vectorResiduals}{r}
\newcommand{\matrixStatesInputsStacked}{Z} %
\newcommand{\matrixResidualsStacked}{R} %

\newcommand{\controller}{\pi_\nominal} %
\newcommand{\controllerUa}{\pi_*} %

\newcommand{\relCostDif}{\delta_c}
\newcommand{\meanRelCostDif}{\bar{\delta}_c}
\newcommand{\knowledgeAdv}{\kappa}

\newcommand{\bayesPred}{b^{*}}
\newcommand{\rdc}{\rho} %

\usetikzlibrary{fadings,shadings} %

\usepackage{siunitx}

\usepackage{amsmath}
\usepackage{amssymb}
\usepackage{mathtools}

\let\originalleft\left
\let\originalright\right
\renewcommand{\left}{\mathopen{}\mathclose\bgroup\originalleft}
\renewcommand{\right}{\aftergroup\egroup\originalright}

\usepackage[linesnumbered,ruled,vlined]{algorithm2e}

\SetCommentSty{mycommfont}
\SetKw{Continue}{continue}

\newtheorem{theorem}{Theorem}[section]

\newtheorem{proposition}[theorem]{Proposition}

\usepackage{cleveref}
\crefname{figure}{Fig.}{Fig.}
\Crefname{figure}{Fig.}{Fig.}
\crefname{section}{Sec.}{Sec.}
\Crefname{section}{Sec.}{Sec.}
\crefname{equation}{}{}
\Crefname{equation}{Eq.}{Eq.}
\crefname{table}{Tab.}{Tab.}
\Crefname{table}{Tab.}{Tab.}
\crefname{algorithm}{Alg.}{Alg.}
\Crefname{algorithm}{Alg.}{Alg.}
\crefname{line}{line}{lines}
\Crefname{line}{line}{lines}
\crefrangeformat{line}{lines #3#1#4--#5#2#6}
\crefname{appendix}{Appendix}{Appendix}
\Crefname{appendix}{Appendix}{Appendix}
\crefformat{footnote}{#2\footnotemark[#1]#3}

\usepackage{booktabs}
\usepackage{makecell}
\usepackage{array}
\usepackage{multirow}

\usepackage[caption=false]{subfig}

\DeclareMathOperator*{\argmin}{argmin}
\AtBeginEnvironment{proof}{\setcounter{paragraph}{0}}

\definecolor{codegreen}{rgb}{0,0.6,0}
\definecolor{codegray}{rgb}{0.5,0.5,0.5}
\definecolor{codepurple}{rgb}{0.58,0,0.82}
\definecolor{backcolour}{rgb}{0.96,0.97,0.97}

\begin{document}
\bstctlcite{IEEEexample:BSTcontrol}
\addtolength{\topmargin}{1.5mm}
\addtolength{\textheight}{-0.5mm}

\maketitle
\thispagestyle{empty}
\pagestyle{empty}

\begin{abstract}

Reinforcement learning (RL) can be a powerful alternative to classical control methods when standard model-based control is insufficient, e.g., when deriving a suitable model is intractable or impossible. 
In many cases, however, the choice between model-based and RL-based control is not obvious. 
Due to the high computational costs of training RL agents, RL-based control should be limited to cases where it is expected to yield superior results compared to model-based control. 
To the best of our knowledge, there exists no approach to quantify the benefit of RL-based control that does not require RL training. 
In this work, we present a computationally efficient, purely simulation-based litmus test predicting whether RL-based control is superior to model-based control. 
Our test evaluates the suitability of the given model for model-based control by analyzing the impact of model uncertainties on the control problem. 
For this, we use reachset-conformant model identification combined with simulation-based analysis.
This is followed by a learnability evaluation of the uncertainties based on correlation analysis.  
This two-part analysis enables an informed decision on the suitability of RL for a control problem without training an RL agent.  
We apply our test to several benchmarks, demonstrating its applicability to a wide range of control problems and highlight the potential to save computational resources. 

\end{abstract}

\section{INTRODUCTION}

In control engineering, controllers are typically designed based on a system model~\cite{ogata_ModernControlEngineering_2022,emami-naeini_FeedbackControlDynamic_2019}. 
For highly complex, high-dimensional control problems, however, this model-based approach can be insufficient. 
The synthesis of a suitable model can be intractable, making standard controller design impossible. 
In this case, reinforcement learning (RL) offers a powerful alternative that does not require any model a priori~\cite{busoniu_ReinforcementLearningControl_2018,recht_TourReinforcementLearning_2019, hasankhani_ComparisonDeepReinforcement_2021,francois-lavet_IntroductionDeepReinforcement_2018}. 
Compared to standard model-based approaches, RL algorithms utilize more complex functions, such as deep neural networks, which enable the learning of highly complex control tasks. 
This flexibility, however, results in substantial computational effort during the training process~\cite{tsiamis_LearningControlLinear_2022}. 
The additional effort is justified only if no model can be derived or if standard model-based approaches fail to deliver sufficient results. 
However, in many cases, an approximate system model can be derived from partial knowledge of the underlying system dynamics. 
For these cases, the choice between standard model-based and RL-based methods is not obvious and depends on the suitability of the model for controller design. 
Even with merely approximate models (e.g., linearized models of nonlinear systems), model-based control such as MPC often yields sufficient control performance~\cite{qin_SurveyIndustrialModel_2003,cibulka_ModelPredictiveControl_2020, morcego_ReinforcementLearningModel_2023, wang_ComparisonReinforcementLearning_2023}. 
Applying RL when a model-based approach would be sufficient results in unnecessary computational effort.  

\begin{figure}[!t]
    \centering
    \begin{tikzpicture}[
    node distance=1cm and 2cm,
    block/.style={draw, rectangle, rounded corners=2pt, align=center},
    testblock/.style={block, minimum height=2.0cm, minimum width=2.2cm, fill=cyan!12!white, text width=2.15cm, font=\tiny, align=center},
    arrow/.style={->, >=stealth, thick},
    labelfont/.style={font=\bfseries\tiny},
    iconlabelfont/.style={font=\scriptsize, align=center}
]

\node[draw, rectangle, thin, rounded corners=2pt, minimum height=4.75cm, minimum width=8.0cm, inner sep=5pt] (frame) at (0,0) {};

\coordinate (icon_top_left) at ($(frame.north west)+(1.0,-0.72)$);

\node[draw, rectangle, thin, rounded corners=2pt, minimum height=1cm, minimum width=1.5cm, inner sep=5pt] (data_frame) at ($(icon_top_left)+(0,0)$) {};

\begin{axis}[
    at={(data_frame.center)},
    anchor=center,
    width=1.2cm,
    height=0.7cm,
    scale only axis,       %
    axis lines=left,
    xmin=0.0, xmax=2.5,
    ymin=-1, ymax=1,
    xtick=\empty,
    ytick=\empty,
    clip=false,
    axis line style={thick},
]
  \addplot [
      domain=0.05:2.1,
      samples=100,
      color=olive!90!black!90,
      thick,
  ] {1*x^3 - 2.5*x^2 + x + 0.2 + 0.15*sin(deg(9*x)) + 0.1*sin(deg(17*x))};
  \addplot [
    domain=0.05:2.1,
    samples=100,
    color=lime!80!black,
    thick,
  ] {1*x^3 - 2.5*x^2 + x + 0.1 + 0.1*sin(deg(10*x)) + 0.2*sin(deg(20*x))};
  \addplot [
    domain=0.05:2.1,
    samples=100,
    color=orange!80!black,
    thick,
  ] {1*x^3 - 2.5*x^2 + x + 0.0 + 0.1*sin(deg(10*x)) + 0.2*sin(deg(25*x))};
  \addplot [
    domain=0.05:2.1,
    samples=100,
    color=red!70!black,
    thick,
  ] {1*x^3 - 2.5*x^2 + x + 0.2 + 0.05*sin(deg(9*x)) + 0.1*sin(deg(25*x))};
  \addplot [
    domain=0.05:2.1,
    samples=100,
    color=violet!80!black,
    thick,
  ] {1*x^3 - 2.5*x^2 + x + 0.4 + 0.05*sin(deg(9*x)) + 0.1*sin(deg(15*x))};
\end{axis}

\node[iconlabelfont] (data-label) at ($(icon_top_left)+(0,-0.8)$) {MEASURED\\DATA};

\coordinate (icon_center_left) at ($(frame.west)+(1.0,-0.1)$);

\node[draw, rectangle, thin, rounded corners=2pt, minimum height=1cm, minimum width=1.5cm, inner sep=5pt] (model_frame) at ($(icon_center_left)+(0,0)$) {};

\begin{axis}[
    at={(model_frame.center)},
    anchor=center,
    width=1.2cm,
    height=0.7cm,
    scale only axis,       %
    axis lines=left,
    xmin=0, xmax=2.5,
    ymin=-1, ymax=1,
    xtick=\empty,
    ytick=\empty,
    clip=false,
    axis line style={thick},
]
  \addplot [
        domain=0.02:2.1,
        samples=100,
        color=cyan!90!black!90,
        thick,
    ] {1*x^3 - 2.5*x^2 + x + 0.2}; %
\end{axis}

\node[iconlabelfont] (model-label) at ($(icon_center_left)+(0,-0.7)$) {MODEL};

\node[draw, rectangle, thin, rounded corners=2pt, minimum height=2.75cm, minimum width=5.5cm, fill=cyan!20!white, inner sep=5pt] (test_frame) at ($(frame.north)+(1.0,-1.6)$) {};

\node[iconlabelfont, anchor=north west, inner sep=3pt]
  at ($(test_frame.north west)+(0.075,-0.075)$) {RL LITMUS TEST};

\node[testblock] (impact) at ($(test_frame.center)+(-1.3, -0.15)$) {};
\node[iconlabelfont, anchor=north, inner sep=3pt, align=center]
  at (impact.north) {KNOWLEDGE \\ ADVANTAGE TEST};
\begin{axis}[
  at={($(impact.center)+(0.0,-8.0)$)},
  anchor=center,
  width=1.6cm,
  height=1.0cm,
  scale only axis,
  axis x line=center,
  axis y line=left,
  xmin=0, xmax=2.5,
  ymin=-1, ymax=1,
  xtick=\empty,
  ytick=\empty,
  clip=false,
  axis line style={thick},
  ]
  \addplot [
        domain=0.02:2.1,
        samples=100,
        color=red!75!black,
        thick,
  ] {0.02*sin(deg(20*x)) + 0.6/(x^0.3)*sin(deg(12*x))};
  \addplot [
        domain=0.02:2.1,
        samples=100,
        color=lime!85!black,
        thick,
  ] {0.02*sin(deg(20*x)) + 0.14/(x^0.6)*sin(deg(12*x))};
\end{axis}

\node[testblock] (learn) at ($(test_frame.center)+(1.3, -0.15)$) {};
\node[iconlabelfont, anchor=north, inner sep=3pt, align=center]
  at (learn.north) {LEARNABILITY \\ TEST};
\begin{axis}[
  at={($(learn.center)+(0.0,-8.0)$)},
  anchor=center,
  width=1.6cm,
  height=1.0cm,
  scale only axis,
  axis x line=bottom,
  axis y line=left,
  xmin=1, xmax=5,
  ymin=0, ymax=0.7,
  xtick=\empty,
  ytick=\empty,
  clip=false,
  axis line style={thick},
  ]
        \addplot[
            only marks,
            mark=*,
            teal!90!black,
            mark size=0.6pt,
          ] coordinates {
            (1.10,0.50) (1.20,0.60) (1.30,0.40) (1.40,0.50) (1.50,0.45) (1.60,0.30) (1.70,0.35) (1.80,0.25) (1.90,0.25)
            (2.00,0.20) (2.10,0.25) (2.20,0.15) (2.30,0.23) (2.40,0.15) (2.50,0.125) (2.60,0.06) (2.70,0.125) (2.80,0.05) (2.90,0.15)
            (3.00,0.10) (3.10,0.05) (3.20,0.125) (3.30,0.08) (3.40,0.15) (3.50,0.10) (3.60,0.17) (3.70,0.20) (3.80,0.15) (3.90,0.30)
            (4.00,0.25) (4.10,0.20) (4.20,0.40) (4.30,0.45) (4.40,0.35) (4.50,0.55) (4.60,0.45) (4.70,0.50) (4.80,0.70) (4.90,0.65)
          };
\end{axis}

\coordinate (bendx1) at ($(model_frame.east)!0.5!(test_frame.west)$);
\coordinate (bendx2) at ($(model_frame.east)!0.5!(test_frame.west)$);
\draw[arrow] (data_frame.east) -- (data_frame.east -| bendx1) |- ($(test_frame.west)+(0,0.1)$);
\draw[arrow] (model_frame.east) -- (model_frame.east -| bendx2) |- ($(test_frame.west)+(0,-0.1)$);

\node[draw, rectangle, thin, rounded corners=2pt, minimum height=1.0cm, minimum width=2cm, fill=cyan!12!white, inner sep=5pt, iconlabelfont, align=center] (rl_control_frame) at ($(test_frame.south)+(-1.3,-1.05)$) {RL BASED \\ CONTROL};
\node[draw, rectangle, thin, rounded corners=2pt, minimum height=1.0cm, minimum width=2cm, fill=cyan!12!white, inner sep=5pt, iconlabelfont, align=center] (model_control_frame) at ($(test_frame.south)+(1.3,-1.05)$) {MODEL BASED \\ CONTROL};

\coordinate (bendy) at ($(test_frame.south)!0.5!(rl_control_frame.north)$);
\draw[arrow] ($(test_frame.south)+(-0.1,0.0)$) |- (rl_control_frame.north |- bendy) -- ($(rl_control_frame.north)+(0,0)$);
\draw[arrow] ($(test_frame.south)+(0.1,0.0)$) |- (model_control_frame.north |- bendy) -- ($(model_control_frame.north)+(0,0)$);

\end{tikzpicture}
    \caption{We propose an automated two-part litmus test for using RL in control requiring only measured data and a dynamic model.}
    \label{fig:test-overview}
\end{figure}
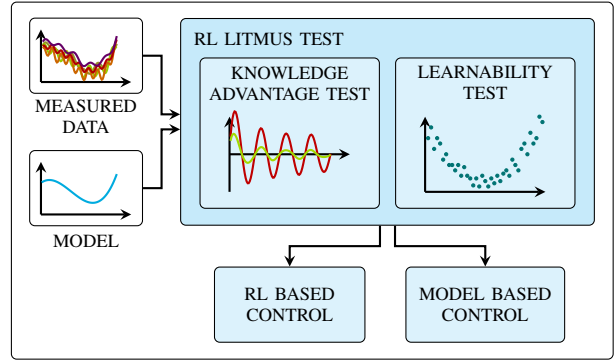

In previous work, the decision between model-based and RL-based control has mostly been made after implementing both approaches and comparing their performance in simulation~\cite{morcego_ReinforcementLearningModel_2023, wang_ComparisonReinforcementLearning_2023, ginzburg-ganz_ComparativeAnalysisOptimal_2025, rajpoot_ComparisonReinforcementLearning_2024}. 
RL is even used for control without comparing it to a model-based controller at all~\cite{lillicrap_ContinuousControlDeep_2019,duan_BenchmarkingDeepReinforcement_2016,bloor_PCGymBenchmarkEnvironments_2026,zhou_DatadrivenMethodFast_2020}. 
Significant computational resources are typically required to train an RL agent without any certainty that it is superior to a more straightforward model-based method. 
It would be desirable to have an automatic test predicting whether an RL-based approach is superior for a given control problem, without requiring to train an agent beforehand.

We address this by proposing an entirely novel test advising for or against RL-based control solely on simulation, without any training. 
This litmus test for RL-based control, as we call it, has two parts (see~\cref{fig:test-overview}) and only requires measured data and a system model. 
First, in the knowledge-advantage test, we assess whether the system knowledge provided by the model is sufficient for model-based control by approximating the improvement given better system knowledge. 
If we find that better system knowledge is beneficial, but not available, a learning-based approach can be a promising alternative. 
Otherwise, if knowing uncertainties offers little benefit, the current model can be considered sufficient for model-based control. This may be because the uncertainties are not relevant to the control problem or because the model-based controller can compensate for their effects. 
However, with this metric alone, we have no information about the extent to which it is possible to get a better understanding of the system dynamics. 
This depends on the type of uncertainty, which can be classified into aleatoric and epistemic uncertainty. 
Aleatoric uncertainty refers to inherent randomness in the system, while epistemic uncertainty refers to a lack of knowledge about the system~\cite{hullermeier_AleatoricEpistemicUncertainty_2021}.
If the uncertainty is purely aleatoric, trying to learn the uncertainties will yield little information gain. 
Epistemic uncertainty, on the other hand, is considered learnable, and an RL agent may be able to find a better policy through learning~\cite{hullermeier_AleatoricEpistemicUncertainty_2021}. 
This is analyzed in the second part of the test, the learnability test. 
For this, we compare model predictions to measured data and, using correlation analysis, evaluate the extent to which the residuals exhibit a learnable structure. 
With this two-part analysis, based solely on simulation and recorded data, we can, for the first time, automatically characterize the expected benefit of an RL-based controller relative to a model-based controller. 
As MPC is widely viewed as one of the most powerful and flexible model-based control methods, we use MPC as a model-based controller baseline in this work. 

The remainder of the paper is structured as follows. 
\cref{sec:preliminaries} presents relevant mathematical concepts, namely a basic introduction to RL-based control, reachset-conformant model identification, and a measure for correlation analysis. 
\cref{sec:litmus_test} presents our proposed two-part approach sketched above.  
In~\cref{sec:experiments}, we demonstrate the capabilities of our approach using several benchmark systems. 
This work is concluded in~\cref{sec:conclusion}. 

\section{PRELIMINARIES}\label{sec:preliminaries}

Let us first briefly introduce the most relevant mathematical concepts for our litmus test. 

\subsection{Reinforcement-Learning-Based Control}

RL is a framework for sequential decision-making based on a Markov decision process (MDP). 
In RL, at each time step, an agent receives state feedback $s$ from its environment and takes an action $a$ according to a policy $\pi$. 
Whereas states are typically denoted by $s$ and actions by $a$ in the RL community, for the remainder of the paper, we use the control-theoretic notation, in which states are denoted by $x$ and actions by $u$. 
Through interactions with the environment, the agent learns the optimal policy $\pi^*$ that maximizes expected reward.

RL does not require any prior knowledge of the system. 
By using advanced approximators, such as deep neural networks, RL can encode highly complex policies \cite{francois-lavet_IntroductionDeepReinforcement_2018}. 
This enables the solution of highly nonlinear problems that are challenging to solve using model-based control. 
The downside of RL-based control, however, is the high computational cost of learning the optimal policy~\cite{recht_TourReinforcementLearning_2019, tsiamis_LearningControlLinear_2022}. 

\subsection{Reachset-conformant model identification}\label{sec:reachset}

In this work, we employ a system identification method that generates so-called reachset-conformant models~\cite{roehm_ModelConformanceCyberPhysical_2019}. 
Specifically, we use the methodology presented in~\cite{lutzow_ScalableReachsetConformantIdentification_2024, lutzow_ReachsetConformantSystemIdentification_2025}. 
We assume that a model is given or can be identified based on data, with
\begin{subequations}
    \label{eq:reachset}
    \begin{align}
    \vectorStates(\indexTime+1) = f(\vectorStates(\indexTime), \vectorInputs(\indexTime), \vectorDisturbances(\indexTime)) , \\
    \vectorOutputs(\indexTime) = g(\vectorStates(\indexTime), \vectorInputs(\indexTime), \vectorDisturbances(\indexTime)) , 
\end{align}
\end{subequations}
where $\indexTime$ is the time step, while the states are represented by $\vectorStates$, the inputs by $\vectorInputs$, and disturbances by $\vectorDisturbances$. 
We require the reachable set at time step $\indexTime$, which is defined as
\begin{align}
    \setOutputs(\indexTime) = \{ \phi (\indexTime;x(0),\vectorInputs(\cdot),\vectorDisturbances(\cdot)) \ | \ x(0) \in \setStates(0), \nonumber \\ \vectorInputs(\tilde{\indexTime}) \in \setInputs(\tilde{\indexTime}), \ \vectorDisturbances(\tilde{\indexTime}) \in \setDisturbances(\tilde{\indexTime}), \ \tilde{\indexTime} = 0, \ldots \indexTime\}, \nonumber
\end{align}
where $\phi (\indexTime;x(0),\vectorInputs(\cdot),\vectorDisturbances(\cdot))$ describes the output at time step $\indexTime$ when starting at the initial state $x(0)$ subject to the input trajectory $\vectorInputs(\cdot)$ and disturbance trajectory $\vectorDisturbances(\cdot)$ and for all times $\vectorInputs(\indexTime) \in \setInputs$ and $\vectorDisturbances(\indexTime) \in \setDisturbances$. 

Reachset-conformant model identification ensures that all measured trajectories are contained within the reachable set of the identified system. 
We use intervals as a set representation for all sets to enable efficient uniform sampling. 

\subsection{Randomized Dependence Coefficient (RDC)}

In the context of learnability, we aim to determine how well a random variable $y$ can be predicted from another variable $x$. 
This depends on the degree of statistical dependence between $x$ and $y$, which can be quantified using various measures, ranging from simple linear correlation coefficients, such as Pearson's correlation, to sophisticated mutual information measures~\cite{desiqueirasantos_ComparativeStudyStatistical_2014}. 
In this work, we use the randomized dependence coefficient (RDC) due to its ability to detect any nonlinear correlation at low computational costs~\cite{lopez-paz_RandomizedDependenceCoefficient_2013}. 
The resulting correlation value lies between $0$ and $1$ and equals zero if and only if the two variables are statistically independent. 
The computational complexity of the RDC with respect to dimensionality $d$ and sample size $m$ is $\mathcal{O} ( d m \log(m))$, making RDC computationally efficient compared to alternative nonlinear correlation measures, whose computational complexity is at least quadratic in $m$.

\section{LITMUS TEST FOR RL-BASED CONTROL}\label{sec:litmus_test}

In this section, we provide a detailed description of our two-part litmus test for the use of RL in control. 

\subsection{Part 1: Knowledge-Advantage Test}\label{sec:part1}

The first part of our litmus test analyzes the expected advantage that knowledge of the uncertainties provides to the controller. 
To this end, we synthesize an MPC and compare its performance with and without perfect knowledge. 
If perfect knowledge is not required, we can already conclude after the first stage that a learning-based controller would not provide any advantage. 

As perfect knowledge is not available in reality, we emulate perfect knowledge by uniformly sampling disturbance values 
\begin{align}
    &\vectorDisturbances^{(\indexDisturbance)}(\indexTime) \sim \text{Uniform} \left( \setDisturbances \right) \nonumber \\ 
    \forall \ &\indexDisturbance = 1, \ldots, \ n_\indexDisturbance , \ \indexTime = 0, \ldots, n_\indexTime-1 , \nonumber
\end{align}
in a simulation environment, where $\setDisturbances$ is obtained through reachset-conformant identification (see \cref{sec:reachset}), $n_\indexDisturbance$ is the number of sampled trajectories, and $n_\indexTime$ is the number of timesteps per trajectory.

Using \cref{eq:reachset}, which is either given or can be identified based on data, we synthesize a nominal MPC $ \vectorInputs_\indexTime= \controller (\vectorStates_\indexTime) $, assuming $ \vectorDisturbances_\indexTime= 0, \ \indexTime \ = 0, \ldots, n_\indexTime- 1 $.   
To evaluate the impact of the uncertainties on $\controller$, we compare it to an oracle MPC $\controllerUa$ that knows $\vectorDisturbances(\cdot)$. 
To ensure analysis across the full state space, we sample initial states 
\begin{equation}
     \vectorStates^{(\indexInitialState)}(0) \sim \text{Uniform} \left( \setStates_0 \right) \ \forall \ \indexInitialState = 1, \ldots, \ n_\indexInitialState , \nonumber
\end{equation}
where $n_\indexInitialState$ is the number of initial states sampled. 
We combine all disturbance trajectories with all initial states and simulate the closed-loop system under the nominal MPC and the oracle MPC for all scenarios. 
The performance of both controllers is compared using the controller cost function $ J $, which is the same for both $\controller$ and $\controllerUa$. 
We compute the total cost for each of the trajectories using the cost function $J$ such that
\begin{align}
    c_\nominal^{(i,j)} = J(\phi(x^{(i)}(0), \pi_\nominal, w^{(j)}(\cdot))), \nonumber \\
    c_*^{(i,j)} = J(\phi(x^{(i)}(0), \pi_*, w^{(j)}(\cdot))). \nonumber
\end{align}
Based on the trajectory costs, we compute the relative cost difference between the nominal and the uncertainty-aware trajectories to quantify the performance advantage gained through additional knowledge
\begin{equation}
        \relCostDif^{(i,j)} = \begin{cases} 1 - \frac{c_*^{(i,j)}}{c_\nominal^{(i,j)}}, & \text{if } c_{\nominal}^{(i,j)} \neq 0, \\ 
    0, & \text{if } c_{\nominal}^{(i,j)} = 0, \end{cases} \nonumber
\end{equation}
and compute the mean relative cost difference $\meanRelCostDif$ 
\begin{align}
    \knowledgeAdv = \meanRelCostDif = \frac{1}{n_{i} \cdot n_{j}} \sum_{i = 1}^{n_{i}} \sum_{j = 1}^{n_{j}} \relCostDif^{(i,j)} , \nonumber
\end{align}
yielding the knowledge advantage $\knowledgeAdv$. 
Assuming nonnegative controller costs ($J \geq 0$) and $c_\nominal^{(i,j)} \geq c_*^{(i,j)}$, this definition guarantees $\knowledgeAdv \in [0,1]$.

When $\knowledgeAdv = 0$, the unmodeled dynamics have no impact on the control performance. 
The closer $\knowledgeAdv$ is to $1$, the larger the potential benefit of learning the uncertainties. 
For the knowledge-advantage test, initial states and uncertainties are sampled until the knowledge advantage converges. 
To this end, we compute $\knowledgeAdv$ batch-wise with batch size $n_b$. 
The sampling is terminated once the change in $\knowledgeAdv$ caused by a new batch is below a convergence threshold $\epsilon$. 
If the $\knowledgeAdv$ value is not low, we investigate how well the uncertainties can be learned by an RL agent in the subsequent learnability test.

\subsection{Part 2: Learnability Test}\label{sec:part2}

In the second part of our proposed litmus test, we investigate the learnability of the uncertainties. 
Learnability of the uncertainties is a sufficient condition for learning-based control to be beneficial. 
As mentioned above, uncertainty can be classified into aleatoric and epistemic uncertainty. 
We quantify the extent to which additional system knowledge can be learned by identifying the share of epistemic uncertainty. 

We use the definition for epistemic and aleatoric uncertainty given in~\cite[Sec. 2.2,2.3]{hullermeier_AleatoricEpistemicUncertainty_2021}. 
They introduce the pointwise Bayes predictor as 
\begin{equation} \label{eq:bayes_pred}
    \begin{aligned}
        \bayesPred(\vectorStatesInputs(\indexTime)) := 
        \argmin_{\hat{\vectorOutputs}(\indexTime+1) \in \setOutputs(\indexTime+1)}& \\
        \int_{\setOutputs(\indexTime+1)}&
        \ell (\vectorOutputs(\indexTime+1), \hat{\vectorOutputs}(\indexTime+1)) \\
        &\cdot dP(\vectorOutputs(\indexTime+1) \mid \vectorStatesInputs(\indexTime)) ,
    \end{aligned}
\end{equation} 
where $\vectorStatesInputs(\indexTime) = [\vectorStates(\indexTime)^\top, \vectorInputs(\indexTime)^\top]^\top$, $\vectorOutputs(\indexTime+1)$ is the output, and $\hat{\vectorOutputs}(\indexTime+1)$ the output prediction. 
The pointwise Bayes predictor is the theoretically optimal point predictor, which chooses the point prediction $\hat{\vectorOutputs}(\indexTime+1)$  that minimizes the loss $\ell$ for a given $\vectorStatesInputs(\indexTime)$. 
$P(\vectorOutputs(\indexTime+1) \mid \vectorStatesInputs(\indexTime))$ describes the output distribution for a given $\vectorStatesInputs(\indexTime)$. 
\Cref{eq:bayes_pred} uses Lebesgue integration, as this does not require a density function $p$. 
We use the pointwise Bayes predictor to describe the optimal prediction of the output at timestep $\indexTime+1$ based on the input and state at timestep $\indexTime$. 
We model the uncertainty as the residual between the measured data and the model prediction as 
\begin{equation}\label{eq:residual_y}
    \vectorResiduals(\indexTime) = \vectorOutputs(\indexTime) - \hat{\vectorOutputs}(\indexTime) .
\end{equation}
The pointwise Bayes predictor can be used to split uncertainty into aleatoric and epistemic components. 
Using~\cref{eq:bayes_pred} and~\cref{eq:residual_y}, we rewrite the residual as
\begin{equation}\label{eq:residual}
    \begin{split}
        \vectorResiduals(\indexTime) = &\underbrace{[\vectorOutputs(\indexTime) - \bayesPred(\vectorStatesInputs(\indexTime-1))]}_{= \epsilon(\indexTime) \text{(aleatoric part)}} \\
        \qquad + &\underbrace{[\bayesPred(\vectorStatesInputs(\indexTime-1)) - \hat{\vectorOutputs}(\indexTime)]}_{ = B(\indexTime) \text{(epistemic part)} } .
    \end{split}
\end{equation}
Therein, the first term describes the aleatoric part as the difference between the optimal pointwise prediction and the true outcome. 
This stems from the fact that \cref{eq:bayes_pred} is a pointwise predictor, whereas the system is modeled as a stochastic process that outputs a distribution. 
The second term describes the learnable, epistemic part as the difference between the current predictor and the optimal pointwise predictor.  
\begin{proposition}[Correlation implies learnable model bias]
Let the residual be decomposed as in~\cref{eq:residual} and the conditional expectation be
\begin{equation} \label{eq:expected_val}
    \mathbb{E}[\vectorResiduals(\indexTime) \mid \vectorStatesInputs(\indexTime-1)]
    = \mathbb{E}[B(\indexTime) + \epsilon(\indexTime) \mid \vectorStatesInputs(\indexTime-1)] .
\end{equation}
If $\vectorResiduals(\indexTime)$ is correlated with $\vectorStatesInputs(\indexTime-1)$, then $B(\indexTime) \neq 0$.
\end{proposition}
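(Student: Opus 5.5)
We argue by contraposition: assuming $B(\indexTime) = 0$, we show that $\vectorResiduals(\indexTime)$ cannot be correlated with $\vectorStatesInputs(\indexTime-1)$. Throughout, we read ``correlated'' in the sense relevant for prediction and for the RDC. That is, the conditional mean $\mathbb{E}[\vectorResiduals(\indexTime) \mid \vectorStatesInputs(\indexTime-1)]$ is a non-constant function of $\vectorStatesInputs(\indexTime-1)$; equivalently, some measurable function of $\vectorStatesInputs(\indexTime-1)$ has nonzero covariance with $\vectorResiduals(\indexTime)$. We take $\ell$ to be the squared loss. Then the pointwise Bayes predictor \cref{eq:bayes_pred} is the conditional mean, $\bayesPred(\vectorStatesInputs(\indexTime-1)) = \mathbb{E}[\vectorOutputs(\indexTime) \mid \vectorStatesInputs(\indexTime-1)]$. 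This is the standard $L^2$-projection characterization, valid as long as this mean lies in $\setOutputs(\indexTime)$, which holds when the interval set contains the support.

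The first step is to compute the aleatoric contribution in \cref{eq:expected_val}. By the definition of $\epsilon(\indexTime)$ in \cref{eq:residual},
\begin{equation*}
\mathbb{E}[\epsilon(\indexTime) \mid \vectorStatesInputs(\indexTime-1)] = \mathbb{E}[\vectorOutputs(\indexTime) \mid \vectorStatesInputs(\indexTime-1)] - \bayesPred(\vectorStatesInputs(\indexTime-1)) = 0.
\end{equation*}
The second step uses the fact that $B(\indexTime)$ is $\vectorStatesInputs(\indexTime-1)$-measurable. Indeed, both $\bayesPred(\vectorStatesInputs(\indexTime-1))$ and the model prediction $\hat{\vectorOutputs}(\indexTime)$ are deterministic functions of $\vectorStatesInputs(\indexTime-1)$, the latter via the nominal model \cref{eq:reachset} with $\vectorDisturbances = 0$. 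Linearity of conditional expectation applied to \cref{eq:expected_val} then gives
\begin{equation*}
\mathbb{E}[\vectorResiduals(\indexTime) \mid \vectorStatesInputs(\indexTime-1)] = B(\indexTime).
\end{equation*}

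Now suppose $B(\indexTime) = 0$ almost surely. Then the conditional mean of the residual is zero. For any bounded measurable $h$, the tower property yields
\begin{equation*}
\mathbb{E}[\vectorResiduals(\indexTime) h(\vectorStatesInputs(\indexTime-1))] = \mathbb{E}\bigl[h(\vectorStatesInputs(\indexTime-1))\,\mathbb{E}[\vectorResiduals(\indexTime) \mid \vectorStatesInputs(\indexTime-1)]\bigr] = 0 = \mathbb{E}[\vectorResiduals(\indexTime)]\,\mathbb{E}[h(\vectorStatesInputs(\indexTime-1))].
\end{equation*}
Hence $\vectorResiduals(\indexTime)$ is uncorrelated with every function of $\vectorStatesInputs(\indexTime-1)$, which contradicts the hypothesis. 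Therefore $B(\indexTime) \neq 0$ on a set of positive probability.

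The main obstacle is the gap between ``correlated'' and ``statistically dependent.'' The RDC detects full dependence, and the aleatoric noise $\epsilon(\indexTime)$ can be mean-zero yet heteroscedastic, so that its variance depends on $\vectorStatesInputs(\indexTime-1)$. In that case $\vectorResiduals(\indexTime)$ is dependent on $\vectorStatesInputs(\indexTime-1)$ even though $B(\indexTime) = 0$. I would handle this by stating explicitly that correlation means mean dependence, i.e., nonzero covariance with some function of $\vectorStatesInputs(\indexTime-1)$. Alternatively, one can add the assumption that $\epsilon(\indexTime)$ is independent of $\vectorStatesInputs(\indexTime-1)$, under which dependence of $\vectorResiduals(\indexTime)$ can only come from $B(\indexTime)$. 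The identification of $\bayesPred$ with the conditional mean, which needs the squared loss, should likewise be stated as an assumption.
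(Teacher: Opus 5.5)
Your proof is correct and follows the paper's own route: optimality of the Bayes predictor gives $\mathbb{E}[\epsilon(k)\mid z(k-1)]=0$, hence $\mathbb{E}[r(k)\mid z(k-1)]=B(k)$, and correlation of $r(k)$ with $z(k-1)$ rules out a vanishing conditional mean. You differ only in proving that last implication directly via the tower property instead of citing Wooldridge, and in stating explicitly what the paper leaves implicit: squared loss, so that the Bayes predictor is the conditional mean; $z(k-1)$-measurability of $B(k)$; and ``correlated'' meaning mean dependence rather than the full statistical dependence the RDC detects.
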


\begin{proof}
By definition, the Bayes pointwise predictor is the optimal pointwise predictor~\cite{hullermeier_AleatoricEpistemicUncertainty_2021}, so
\begin{equation}
    \mathbb{E}[\epsilon(\indexTime) \mid \vectorStatesInputs(\indexTime-1)] = 0 . \nonumber
\end{equation}
Hence,~\cref{eq:expected_val} simplifies to
\begin{equation}\label{eq:bias_corr}
    \mathbb{E}[\vectorResiduals(\indexTime) \mid \vectorStatesInputs(\indexTime-1)] = B(\indexTime) .
\end{equation}
Using~\cite[2.1]{wooldridge_IntroductoryEconometricsModern_2009}, any correlation between
$\vectorResiduals(\indexTime)$ and $\vectorStatesInputs(\indexTime-1)$ implies
\begin{equation}\label{eq:corr}
    \mathbb{E}[\vectorResiduals(\indexTime) \mid \vectorStatesInputs(\indexTime-1)] \neq 0 .
\end{equation}
Therefore, comparing \cref{eq:bias_corr} with \cref{eq:corr} yields $B(\indexTime) \neq 0$, which indicates reducible epistemic uncertainty.
\end{proof}

To test whether there is a model bias, we compute the residual for every time step $\indexTime = 1,\ldots,n_\indexTime-1$ of each measured trajectory $\indexMeasured$.
For the correlation analysis, we use the RDC due to the arguments provided in~\cref{sec:preliminaries}. 
It yields the correlation coefficient $\rho = RDC(\matrixResidualsStacked, \matrixStatesInputsStacked) \in [0,1]$, where the matrix $\matrixResidualsStacked$ contains all residuals $\vectorResiduals(\indexTime)$ and $\matrixStatesInputsStacked$ contains all vectors $\vectorStatesInputs(\indexTime-1)$. 
A high correlation value indicates a model bias, and thus a learnable structure in the uncertainties. 
In this case, a learning-based approach is expected to yield better results compared to a model-based approach.

\section{NUMERICAL EXPERIMENTS}\label{sec:experiments}

We evaluate our litmus test on three control benchmarks from different domains with increasing complexity. 
Throughout the benchmarks, we introduce several types of uncertainty, namely random noise, structured external disturbances, parametric uncertainty, and unmodeled dynamics. 
After running our test on each example, we evaluate the outcome by training an RL agent and comparing its performance to that of an MPC. 
To ensure obtaining a good policy, we try Proximal Policy Optimization (PPO)~\cite{schulman_ProximalPolicyOptimization_2017}, Twin Delayed Deep Deterministic Policy Gradient (TD3)~\cite{fujimoto_AddressingFunctionApproximation_2018}, and Soft Actor-Critic (SAC)~\cite{haarnoja_SoftActorCriticOffPolicy_2018} for each experiment. 
These are considered the most promising RL algorithms for continuous control problems~\cite{lin_ComparisonDeepReinforcement_2021}. 
We perform systematic hyperparameter tuning using the \textit{Optuna} library~\cite{akiba_OptunaNextgenerationHyperparameter_2019}. 
It is used for tuning the nominal MPC, as well as for training the RL agents. 
We maintain $n_b=10$ and $\epsilon=0.001$ for the knowledge-advantage test throughout our evaluation, as this configuration was found to produce stable results across all tested systems.
All experiments and trainings are performed using an Intel(R) Xeon(R) Platinum 8380 CPU combined with an NVIDIA A100 GPU. 

\subsection{Cart Pole}

\begin{figure}
    \centering
    \subfloat[Low knowledge advantage.\label{fig:low_impact}]{
        \input{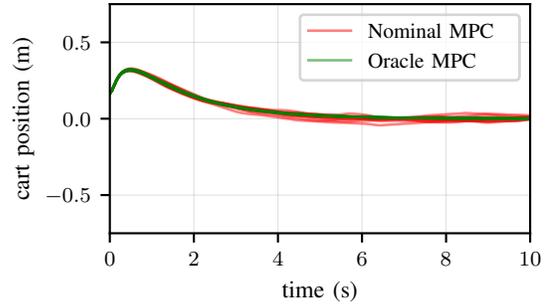}
    }
    \vspace{2mm}
    \subfloat[High knowledge advantage.\label{fig:high_impact}]{
        \input{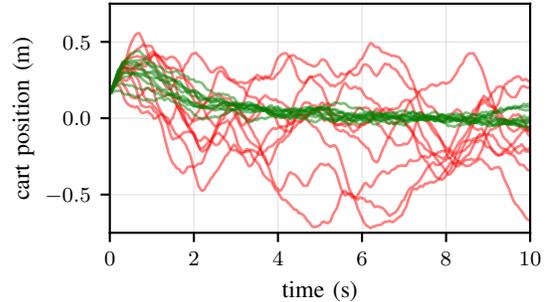}
    }
    \caption{Cart position of the cart-pole example controlled by a nominal and an oracle MPC.}
    \label{fig:cartpole_impact}
\end{figure}

The cart pole is a common control benchmark in which the controller must stabilize a pole balanced on a cart with one-dimensional motion~\cite{barto_NeuronlikeAdaptiveElements_1983}. 
The force on the cart acts as the control input, and the state vector is
\begin{equation}
    x = [s, v, \theta, \omega]^\top, \nonumber
\end{equation}
where $s$ and $v$ are the position and velocity of the cart, respectively, and $\theta$ and $\omega$ the angle and angular velocity of the pole with $\theta = 0$ if the pole is upright. 

In the first experiment, we add uniformly sampled noise to the velocity $v$ sampled from the interval $ \setDisturbances_s = [-0.01, 0.01]$. 
A total of $n_\indexMeasured = 50$ trajectories is generated, initial states are sampled from $\setStates_0 = [-0.3, 0.3] \times [-0.2,0.2] \times [-0.2,0.2] \times [-0.2,0.2]$. 
The first part of our test returns a knowledge advantage of $\knowledgeAdv = 0.043$. 
This indicates little impact of the uncertainties on the control problem (see~\cref{fig:low_impact}). 
Thus, the MPC can be considered sufficiently performant, and there is little expected performance gain through a learning-based approach. 
This is validated by training an RL agent, where PPO delivers the best results for this example. 
The MPC achieves an average per-trajectory cost of $102.2$ across $50$ trajectories, while the best RL agent achieves a slightly higher average of $104.0$. 

In a second experiment, we add disturbances to all four states. 
The disturbances are sampled uniformly from $\setDisturbances_s = [-0.002, 0.002]^4$. 
The same number of trajectories and the same initial state interval as before are used. 
In this case, part one of our test delivers a knowledge advantage of $\knowledgeAdv = 0.863$, close to the maximum value of $1$, indicating that the MPC is underperforming compared to a controller with perfect knowledge (see~\cref{fig:high_impact}). 
To assess how well a learning-based approach can leverage this performance gap, we apply the second part of our proposed test, the learnability test. 
Due to the random nature of the disturbances, we expect a low learnability value. 
The test indeed returns a low uncertainty learnability of $ \rdc = 0.097 $. 
The residual of $\omega$ is plotted over $\omega$ in~\cref{fig:not_learnable}, showing no correlation. 
The value of $\rdc$ suggests that finding a more performant control policy is challenging even for learning-based controllers. 
For this example, we found that SAC achieves the best performance with an average cost of $100$, which is similar to the cost of $99.6$ for the MPC as predicted by our litmus test.   
\begin{figure}
    \centering
    \subfloat[Low learnability.\label{fig:not_learnable}]{
        \input{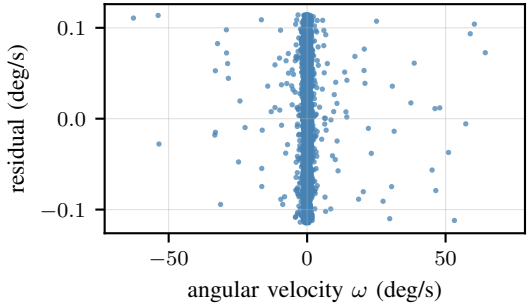}
    }

    \vspace{2mm}

    \subfloat[High learnability.\label{fig:learnable}]{
        \input{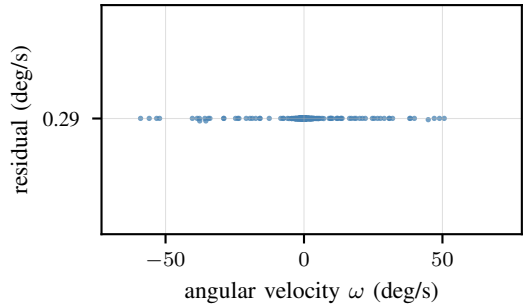}
    }
    \caption{The angular velocity of the pole residual plotted over the state.}
    \label{fig:cartpole_learnability}
\end{figure}
\begin{figure}
    \centering
    \input{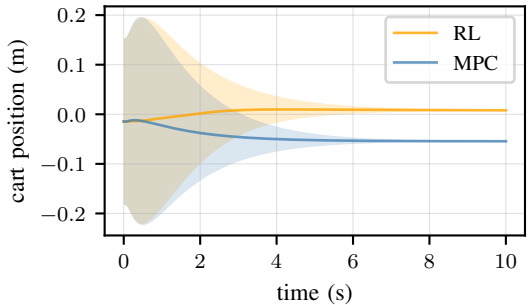}
    \caption{Cart position of the cart pole under constant disturbance with high knowledge advantage controlled by an RL agent and an MPC across $50$ trajectories. The graphs mark the mean across all initial states plotted in the shaded region. }
    \label{fig:inv_pend_rl_mpc}
\end{figure}

In a third experiment, a constant disturbance of $\qty{0.005}{\radian\per\second}$ is added to the angular velocity $ \omega $, which can be learned. 
This results in a significant knowledge advantage of $\knowledgeAdv = 0.841$ and a learnability of $ \rdc = 0.866 $. 
This is illustrated in~\cref{fig:learnable}, which shows the residual of the angular velocity plotted against the angular velocity. 
As the residual is constant with respect to the state, there is an obvious correlation. 
We found that SAC delivers the best performance, achieving an average cost of $99.4$, whereas the MPC achieves only $108.1$. 
This is an improvement of around $8 \%$, confirming the prediction of our litmus test. 
The performance gain is illustrated in~\cref{fig:inv_pend_rl_mpc}: 
The RL agent can keep the cart closer to the reference point at $s=0$, resulting in a lower overall cost.

\begin{table}[t]
\centering
\caption{Wall-clock time comparison for cart pole experiments.}
\label{tab:time}
\begin{tabular}{p{4.2cm}cc}
\toprule
\textbf{Experiment} & \textbf{RL Training} & \textbf{Litmus Test} \\
\midrule
\rule{0pt}{2.0ex}Low knowledge advantage & 41 h 30 min & \textbf{22 min} \\
\rule{0pt}{5ex}\makecell[l]{High knowledge advantage,\\low learnability} & 32 h 10 min & \textbf{18 min} \\
\rule{0pt}{5ex}\makecell[l]{High knowledge advantage,\\high learnability} & 30 h 37 min & \textbf{25 min} \\
\bottomrule
\end{tabular}
\end{table}

\cref{tab:time} compares the time required to tune and train an RL agent with the time it takes to run our proposed test algorithm for the cart pole experiments. 
The table highlights the significant potential to save computation time and resources through our litmus test when no performance improvement is expected from RL-based control. 
While the hyperparameter tuning and training require at least $30$ hours per experiment, the test algorithm computes the knowledge advantage and learnability in a few minutes. 

\subsection{Continuously Stirred Tank Reactor}

As a second benchmark, we apply our test algorithm to the control of the continuously stirred tank reactor (CSTR) in~\cite{machalek_DynamicEconomicOptimization_2020} -- a system common in process engineering. 
The exponential nature of Arrhenius reaction kinetics in a CSTR results in highly nonlinear system dynamics.
The state vector of the system is 
\begin{equation}
    x = [c_A, c_B, T]^\top , \nonumber
\end{equation}
where $c_A$ and $c_B$ are the concentration of species A and B, and $T$ is the temperature of the reactor. 
In the reactor, two reactions are occurring simultaneously, 
\begin{align}
    \dot{c_A} &= \dfrac{F(c_{A,0} - c_A) - k_A c_A V + k_B c_B V}{V}, \nonumber \\
    \dot{c_B} &= \dfrac{F(- c_B) + k_A c_A V - k_B c_B V}{V}, \nonumber
\end{align}
where $k_A$ and $k_B$ are the respective reaction rates, $F$ is the flow rate and $V$ the reactor volume. 
The reaction rates are computed by an Arrhenius formulation
\begin{align}
    k = k_0 e^{\frac{-E_a}{R T}} , \nonumber
\end{align} 
where $k_0$ is a pre-exponential factor, $E_a$ is the activation energy of the reaction and $R$ the gas constant. 
The objective of the control problem is to maximize the concentration of species B while minimizing energy consumption. 
The reactor is operated for a total of $\qty{600}{\minute}$ with a sample time of $\qty{1}{\minute}$. 
In our adaptation, we assume that the reaction rates of the model are inaccurate. 
The pre-exponential factor of reaction rate A is overestimated by $10\%$ while the factor for reaction rate B is underestimated by $5\%$. 
Our knowledge-advantage test indicates that this parametric uncertainty yields a knowledge advantage of $\knowledgeAdv = 0.011$. 
This indicates a near-zero impact of the model mismatch on the control problem. 
This can be explained by the inherently slow dynamics arising from the high inertia of the system relative to the sample time, so that the MPC has enough time to compensate for the inaccurate model. 
With PPO, we achieve the best performance, with an average cost of $1017.63$ per trajectory, compared to $1032.54$ for the MPC. 
With a deviation of less than $1.5 \%$, both controllers achieve similar performance (see~\cref{fig:cstr_rl_mpc}), supporting the findings of our litmus test. 
\begin{figure}
    \centering
    \input{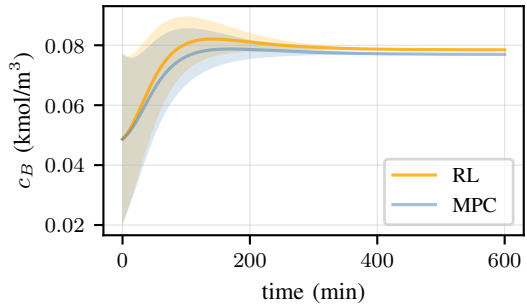}
    \caption{CSTR control with low knowledge advantage controlled by an RL agent and an MPC with parameter mismatch across $50$ trajectories. The graphs mark the mean across all initial states plotted in the shaded region.}
    \label{fig:cstr_rl_mpc}
\end{figure}

\subsection{2D Quadrotor}

As a third example, we consider the task of stabilizing a two-dimensional quadrotor, where the motion of the quadrotor is constrained to the $xz$-plane~\cite{yuan_SafecontrolgymUnifiedBenchmark_2022}. 
The objective of the control problem is to stabilize the quadrotor above the ground. 
Close to the ground, the quadrotor is affected by ground effects, which, in our case, are not accounted for by the model. 
The state vector is
\begin{equation}
    x = [s_x, s_z, v_x, v_z, \theta, \omega]^\top \nonumber ,
\end{equation}
 where $s_x, s_z$ and $v_x, v_z$ are the position and velocity in a two-dimensional Cartesian coordinate system, and $\theta$ and $\omega$ are the roll angle and the respective angular velocity. 
The input vector is two-dimensional, containing the thrust at each rotor. 
A total of $30$ trajectories are simulated and used to synthesize a reachset-conformant model. 
The unmodeled ground effects result in a knowledge advantage of $\knowledgeAdv = 0.03$, indicating that a model-based controller can compensate for their effects. 
For this experiment, we achieve the best performance with SAC. 
However, the RL agent cannot match the performance of the MPC. 
The MPC achieves an average cost of $130.5$ per trajectory, while the RL agent achieves $154.7$. 
This highlights the inherent challenge of training a performant RL policy and emphasizes the need for our proposed litmus test.

\section{CONCLUSION}\label{sec:conclusion}

We have presented for the first time an automatic test algorithm to quantify the expected benefit of RL in control. 
The goal of this litmus test is to quantify the expected advantage of RL-based control over model-based control for any given control problem. 
This enables saving time and resources for RL training when no sufficient benefit is expected. 
Our test consists of two parts: It evaluates the impact of uncertainties not accounted for by the model, followed by an analysis of their learnability. 
The test only requires measured trajectories and a system model. 
If no model is available, one can be identified from the measured trajectories, increasing the applicability of the test. 
The capability of the test is demonstrated in numerical experiments from various domains. 
The code of the test will be made available online.

\bibliographystyle{IEEEtran}
\bibliography{control,references} 

\end{document}